\newcommand{\X}{\mathcal{X}}
\begin{document}

\title{Lyapunov Exponent Evaluation of the CBC Mode of Operation}

\author{Abdessalem Abidi, Christophe Guyeux, Jacques Demerjian,\\ Belagacem Bouall\`egue, and Mohsen Machhout}

\maketitle

\begin{abstract}
The Cipher Block Chaining (CBC) mode of encryption was invented in 1976, and it is currently one of the most commonly used mode. In our previous research works, we have proven that the CBC mode of operation exhibits,  under some conditions, a chaotic behavior. The dynamics of this mode has been deeply investigated later, both qualitatively and quantitatively, using the rigorous mathematical topology field of research. In this article, which is an extension of our previous work, we intend to compute a new important quantitative property concerning our chaotic CBC mode of operation, which is the Lyapunov exponent.
\end{abstract}

\section{Introduction}
Blocks ciphers, like Data Encryption Standard (DES) or 
Advanced Encryption Standard (AES), have a very simple 
principle: they do not treat the original text bit by bit 
but they manipulate blocks of text. More precisely, the 
plaintext is broken into blocks of $\mathsf{N}$ bits. For 
each one, the encryption algorithm is applied to obtain an 
encrypted block that has the same size. Then, we put 
together all of these blocks, which are separately 
encrypted, to obtain the full encrypted message. For 
decryption, we proceed in the same way, but now starting 
from the ciphertext, in order to obtain the original one 
employing the decryption algorithm in place of the 
encryption function. So it is not sufficient to put anyhow 
a block cipher algorithm in a program. We can, instead, use 
these algorithms in various ways according to their specific 
needs. These ways are called block cipher modes of 
operation. Indeed, there are several modes and each one of 
them differs from others by its own characteristics, in 
addition to its specific security properties. In this 
article, we are only interested in the Cipher Block 
Chaining mode and we will quantify its chaotic behavior 
thanks to the Lyapunov exponent. To do so, we first show that such mode of operations can be considered as dynamical systems.

Indeed, some dynamical systems are very sensitive to small changes in their initial
condition. Both constants of sensitivity to initial conditions and
of expansivity illustrate that~\cite{gb11:bc,guyeux10}. However, these
variations can quickly take enormous proportions, grow exponentially, and none of these constants can measure such a behavior. Alexander Lyapunov has examined this phenomenon and 
introduced an exponent that measures the rate at which these small variations can grow.

\begin{definition}
\label{def:lyapunov}
Let $f: \mathds{R} \longrightarrow \mathds{R}$.
The \emph{Lyapunov exponent} of the system defined by $x^0 \in \mathds{R}$ and
$x^{n+1} = f(x^n)$
 is:
$$\displaystyle{\lambda(x_0)=\lim_{n \to +\infty} \dfrac{1}{n} \sum_{i=1}^n \ln \left| ~f'\left(x^{i-1}\right)\right|}.$$
\end{definition}

Consider a dynamical system with an infinitesimal error on the initial condition $x_0$. 
When the Lyapunov exponent is positive, this
error will increase (situation of chaos), whereas it will decrease if $\lambda(x_0)\leqslant 0$.

\begin{example}
The Lyapunov exponent of the logistic map $x^0 \in [0,1]$, $x^{n+1}=\mu x^n (1-x^n)$~\cite{arroyo2008inadequacy} becomes positive for
$\mu>3,54$, but it is always smaller than 1. The tent map~\cite{wang2009block,Guyeux2012} 
and the doubling map of the circle~\cite{richeson2008chain}, two other well-known chaotic dynamical systems, have a Lyapunov exponent equal to $\ln(2)$.
\end{example}

Sometimes, instead of trying to prove  directly the 
properties on the system itself, it is preferable to 
reduce the initial problem to another whose characteristics 
are known or seem to be accessible. Such a reduction tool 
is called, in the mathematical theory of chaos, the 
semi-conjugacy.

\begin{definition}
\label{def:lyapunov}
The discrete dynamical system $(\mathcal{X},f)$ is  \emph{topologically
semi-conjugate} to the system $(\mathcal{Y},g)$ if it exists a 
function $\varphi : \mathcal{X} \longrightarrow
\mathcal{Y}$, both continuous and onto, such that: $$\varphi \circ f = g \circ \varphi,$$ \noindent
that is, which makes commutative the following diagram~\cite{Formenti1998}.

\begin{equation*}
\begin{CD}
\mathcal{X} @>f>>\mathcal{X}\\
    @V{\varphi}VV                    @VV{\varphi}V\\
\mathcal{Y} @>>g> \mathcal{Y}
\end{CD}
\end{equation*}

In this case, the system $(\mathcal{Y},g)$ is called a \emph{factor} of the system
$(\mathcal{X},f)$.
\end{definition}

Various dynamical behaviors are inherited by systems factors~\cite{Formenti1998}.
They are summarized in the following proposition:

\begin{proposition}
\label{prop:factor-system-chaos-preservation}
Let $(\mathcal{Y},g)$ a factor of the system $(\mathcal{X},f)$. Then:
\begin{enumerate}
 \item for all $j \leqslant k$, $p \in Per_k(f) \Longrightarrow \varphi(p) \in Per_j(g)$, 
 where $Per_n(h)$ stands for the set of points of period $n$ for 
 the iteration function $h$.
\item $(\mathcal{X},f)$ regular $\Longrightarrow$ $(\mathcal{Y},g)$ regular,
\item $(\mathcal{X},f)$ transitive $\Longrightarrow$ $(\mathcal{Y},g)$ transitive.
\end{enumerate}
So if $(\mathcal{X},f)$ is chaotic as defined by Devaney, then
$(\mathcal{Y},g)$ is chaotic too.
\end{proposition}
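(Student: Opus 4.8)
The plan is to establish each of the three numbered implications separately, since they address independent dynamical properties (periodicity, regularity, and transitivity), and then combine them to conclude the preservation of Devaney chaos. Throughout, the essential ingredient is the semi-conjugacy relation $\varphi \circ f = g \circ \varphi$, which by an immediate induction yields $\varphi \circ f^n = g^n \circ \varphi$ for every $n$. I would prove this iterated identity first, as it underpins all three parts.

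For the first item, let $p \in Per_k(f)$, so that $f^k(p) = p$. Applying $\varphi$ and using the iterated relation gives $g^k(\varphi(p)) = \varphi(f^k(p)) = \varphi(p)$, so $\varphi(p)$ has period dividing $k$; hence for the least such period $j$ we have $j \leqslant k$ and $\varphi(p) \in Per_j(g)$. For the second item (regularity, meaning the periodic points are dense in the phase space), I would take an arbitrary point $y \in \mathcal{Y}$ and an open neighborhood $V$ of it. Since $\varphi$ is onto, choose $x$ with $\varphi(x) = y$, and since $\varphi$ is continuous, $\varphi^{-1}(V)$ is an open neighborhood of $x$. Density of periodic points of $f$ in $\mathcal{X}$ supplies a periodic point $p \in \varphi^{-1}(V)$; by the first item $\varphi(p)$ is periodic, and it lies in $V$, establishing density of periodic points in $\mathcal{Y}$.

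For the third item (transitivity), I would fix two nonempty open sets $V_1, V_2 \subseteq \mathcal{Y}$. Their preimages $U_1 = \varphi^{-1}(V_1)$ and $U_2 = \varphi^{-1}(V_2)$ are open by continuity and nonempty because $\varphi$ is onto. Transitivity of $(\mathcal{X},f)$ gives an integer $n$ and a point $x \in U_1$ with $f^n(x) \in U_2$. Pushing through $\varphi$ and using the iterated relation, $\varphi(x) \in V_1$ while $g^n(\varphi(x)) = \varphi(f^n(x)) \in V_2$, which is exactly transitivity for $(\mathcal{Y},g)$. Finally, since Devaney chaos is the conjunction of regularity, transitivity, and sensitive dependence on initial conditions, and since on a metric space with a dense set of periodic points transitivity already forces sensitivity (the Banks et al.\ theorem), items 2 and 3 together yield that $(\mathcal{Y},g)$ is Devaney-chaotic whenever $(\mathcal{X},f)$ is.

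The routine parts here are genuinely routine: each implication is a one-line diagram chase once the iterated semi-conjugacy is in hand. The main subtlety to watch is the continuity-and-surjectivity bookkeeping in parts 2 and 3 — one must use surjectivity to guarantee the preimages are nonempty (so that a witness point actually exists) and continuity to guarantee they are open (so that the hypotheses on $(\mathcal{X},f)$ apply to them). A secondary point worth stating carefully is that the final sensitivity claim is not proved directly from the semi-conjugacy but is instead inherited as a consequence of transitivity plus density of periodic points via the standard characterization of Devaney chaos, so I would flag that the phase spaces are assumed to be (infinite) metric spaces for that implication to be valid.
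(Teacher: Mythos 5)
Your proof is correct, but note that the paper itself offers no proof of this proposition at all: it is stated as a known result inherited from the literature (the citation to Formenti), so there is nothing internal to compare against. Your argument is the standard one and is sound: the induction to $\varphi \circ f^n = g^n \circ \varphi$, the diagram chase for periodic points, and the pullback of open sets via continuity plus surjectivity for regularity and transitivity are exactly what a written-out proof should contain. Two of your side remarks deserve emphasis. First, you implicitly repair the statement: as printed, ``for all $j \leqslant k$, $\varphi(p) \in Per_j(g)$'' is false if read literally (a point of exact period $2$ need not map to a fixed point); your existential reading --- the minimal period of $\varphi(p)$ is some $j \leqslant k$, since $g^k(\varphi(p)) = \varphi(p)$ --- is the correct interpretation. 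Second, your observation that sensitivity is \emph{not} transferred by the semi-conjugacy itself, but must be recovered from regularity plus transitivity via the Banks--Brooks--Cairns--Davis--Stacey theorem (valid on infinite metric spaces), is a genuine hypothesis the paper glosses over when it asserts ``if $(\mathcal{X},f)$ is chaotic then $(\mathcal{Y},g)$ is chaotic too''; since the paper's definition of Devaney chaos explicitly lists sensitive dependence, flagging that the phase space $\mathcal{Y}$ must be an infinite metric space is not pedantry but a needed assumption, and it is satisfied in the paper's application where $\mathcal{Y} = \big[0, 2^{10}\big[$ with the metric $D$.
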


Having these materials in mind, it is now possible to measure the Lyapunov exponent of some CBC mode of operations. Do do so, we will follow the canvas described hereafter.
In Section
\ref{sec:basic-recalls}, some basic reminders are given.
The semiconjugacy allowing the 
exponent evaluation is described in Section~\ref{sec:topological-semiconjugacy}. In the next one, the
consequences of such a semi-conjugacy are outlined, and the exponent is computed. This article ends by a conclusion section where our contribution is summarized and intended future work is outlined.

\section{Basic Recalls}
\label{sec:basic-recalls}

\subsection{The Cipher Block Chaining (CBC) mode}
 \label{sec:CBC properties}

The CBC block cipher mode of operation presents a very popular way of encryption that is used in numerous applications, despite the fact that encryption in this mode can be performed only using one thread. Cipher block chaining is a block cipher mode that provides confidentiality but not message integrity in cryptography. 
The operating principle of this mode is to add XOR each subsequent plain-text block to a cipher-text one that was previously received, see Figure~\ref{fig:CBC}. 
Each subsequent cipher-text block depends on the previous one. Finally, the first plain-text block is added XOR to a random Initialization Vector (commonly referred to as IV). This vector has the same size as all plain-text blocks. 

To decrypt cipher-text blocks, one should add XOR output data from decryption algorithm to previous cipher-text blocks. The receiver knows all cipher-text blocks just after obtaining encoded the message, thus he can decrypt the message using many threads simultaneously.
If one bit of a plain-text message is damaged (for instance, because of some earlier transmission error), all subsequent cipher-text blocks will be damaged and it will be never possible to decrypt the cipher-text received from this plain-text. As opposed to that, if one cipher-text bit is damaged, only two received plain-text blocks will be damaged.

\begin{figure}
    \centering
 \subfigure[CBC encryption mode]{\label{fig:CBCenc}
        \includegraphics[scale=0.3]{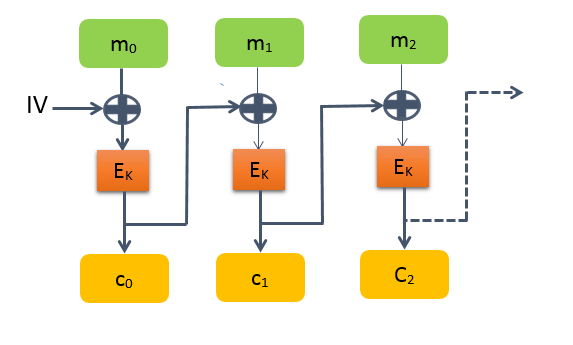}}     \subfigure[CBC decryption mode]{\includegraphics[scale=0.3]{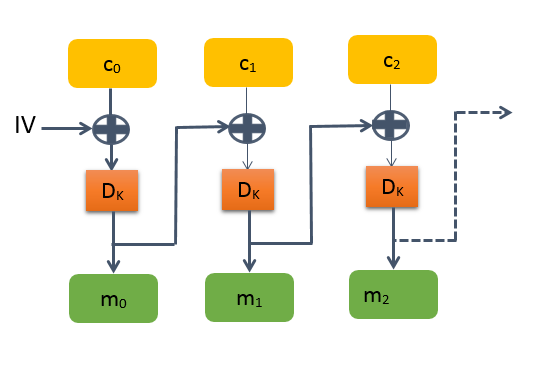}}
    \caption{CBC mode of operation}
     \label{fig:CBC}
\end{figure}



Finally, note that a message that is to be encrypted using the CBC mode, should be extended until being as long as a multiple of a single block length.

\subsection{Modeling the CBC mode as a dynamical system}
Our modeling follows a same canvas than what has be done for hash functions~\cite{bg10:ij,gb11:bc} or pseudo-random number generation~\cite{bfgw11:ij}.
Let us consider the CBC mode of operation with a keyed encryption function $\varepsilon_k:\mathds{B}^\mathsf{N} \rightarrow \mathds{B}^\mathsf{N} $ depending on a secret key $k$, where $\mathsf{N}$ is the size for the block cipher, and $\mathcal{D}_k:\mathds{B}^\mathsf{N} \rightarrow \mathds{B}^\mathsf{N} $ is the associated  decryption function, which is such that $\forall k, \varepsilon_k \circ \mathcal{D}_k$ is the identity function. We define  
the Cartesian product $\mathcal{X}=\mathds{B}^\mathsf{N}\times\mathcal{S}_\mathsf{N}$, where:
\begin{itemize}
\item $\mathds{B} = \{0,1\}$ is the set of Boolean values,
\item $\mathcal{S}_\mathsf{N} = \llbracket 0, 2^\mathsf{N}-1\rrbracket^\mathds{N}$, the set of infinite sequences of natural integers bounded by $2^\mathsf{N}-1$, or the set of infinite $\mathsf{N}$-bits block messages,
\end{itemize}
in such a way that $\mathcal{X}_\mathsf{N}$ is constituted by couples of internal states of the mode of operation together with sequences of block messages.
Let us consider the initial function:
$$\begin{array}{cccc}
 i:& \mathcal{S}_\mathsf{N} & \longrightarrow & \llbracket 0, 2^\mathsf{N}-1 \rrbracket \\
 & (m^i)_{i \in \mathds{N}} & \longmapsto & m^0
\end{array}$$
that returns the first block of a (infinite) message, and the shift function:
$$\begin{array}{cccc}
 \sigma:& \mathcal{S}_\mathsf{N} & \longrightarrow & \mathcal{S}_\mathsf{N} \\
 & (m^0, m^1, m^2, ...) & \longmapsto & (m^1, m^2, m^3, ...)
\end{array}$$
which removes the first block of a message. Let $m_j$ be the $j$-th bit of integer, or block message, $m\in \llbracket 0, 2^\mathsf{N}-1 \rrbracket$, expressed in the binary numeral system, and when counting from the left. We define:
$$\begin{array}{cccc}
F_f:& \mathds{B}^\mathsf{N}\times \llbracket 0, 2^\mathsf{N}-1 \rrbracket & \longrightarrow & \mathds{B}^\mathsf{N}\\
 & (x,m) & \longmapsto & \left(x_j m_j + f(x)_j \overline{m_j}\right)_{j=1..\mathsf{N}} 
\end{array}$$
This function returns the inputted binary vector $x$, whose $m_j$-th components $x_{m_j}$ have been replaced by $f(x)_{m_j}$, for all $j=1..\mathsf{N}$ such that $m_j=0$. In case where $f$ is the vectorial negation, this function will correspond to one XOR between the clair text and the previous encrypted state.  

Denote by $f_0$ the vectorial negation. So the CBC mode of operation can be rewritten n a condensed way, as follows.
\begin{equation}
\label{eq:sysdyn}
\left\{
\begin{array}{ll}
X^0 = & (IV,m)\\
X^{n+1} = & \left(\mathcal{E}_k \circ F_{f_0} \left( i(X_1^n), X_2^n\right), \sigma (X_1^n)\right)
\end{array}
\right.
\end{equation}
For any given $g:\llbracket 0, 2^\mathsf{N}-1\rrbracket \times \mathds{B}^\mathsf{N} \longrightarrow \mathds{B}^\mathsf{N}$, we denote $G_g(X) = \left(g(i(X_1),X_2);\sigma (X_1)\right)$ (when $g = \mathcal{E}_k\circ F_{f_0}$, we obtain one cypher block of the CBC, as depicted in Figure~\ref{fig:CBC}). So the recurrent relation of Eq.\eqref{eq:sysdyn} can be rewritten in a condensed way, as follows.
\begin{equation}
X^{n+1} = G_{\mathcal{E}_k\circ F_{f_0}} \left(X^n\right) .
\end{equation}
With such a rewriting, one iterate of the discrete dynamical system above corresponds exactly to one cypher block in the CBC mode of operation. Note that the second component of this system is a subshift of finite type, which is related to the symbolic dynamical systems known for their relation with chaos~\cite{lind1995introduction}.

We then have defined a distance on $\mathcal{X}_\mathsf{N}$ as follows: $d((x,m);(\check{x},\check{m})) = d_e(x,\check{x})+d_m(m,\check{m})$, where~\cite{abidi2016proving}:
$$\left\{\begin{array}{ll}
d_e(x,\check{x})  & = \sum_{k=1}^\mathsf{N} \delta (x_k,\check{x}_k)  \\
&\\
d_m(m,\check{m}) & = \displaystyle{\dfrac{9}{\mathsf{N}} \sum_{k=1}^\infty \dfrac{\sum_{i=1}^\mathsf{N} \left|m_i - \check{m}_i\right|}{10^k}} .
\end{array}\right.$$
in which $\delta(x,y)=1$ if $x=y$, else it is 0. Using this modeling, we have been able to prove that~\cite{abidi2016proving},
\begin{theorem}
\label{th:chaos selon Devaney}
The CBC mode of operation $G_{\mathcal{E}_k\circ F_{f_0}}$ is chaotic, as defined by Devaney~\cite{devaney}, on the topological space $(\mathcal{X},d)$. This means that $G_{\mathcal{E}_k\circ F_{f_0}}$ has  on $(\mathcal{X}, d)$ the properties of:
\begin{itemize}
  \item \emph{regularity}: its set of periodic
points is dense in $\mathcal{X}_\mathsf{N}$ (for any point $x$ in $\mathcal{X}_\mathsf{N}$,
any neighborhood of $x$ contains at least one periodic point).
\item \emph{topologically transitivity}: for any pair of open sets
$U,V \subset \mathcal{X}_\mathsf{N}$, there exists an integer $k>0$ such that $G_{\mathcal{E}_k\circ F_{f_0}}^k(U) \cap V \neq \varnothing$.
\item \emph{sensitive dependence on initial conditions}: there exists $\delta >0$ such that, for any $x\in \mathcal{X}_\mathsf{N}$ and any neighborhood $V$ of $x$, there exist $y\in V$ and $n > 0$ such that
$$d\left(G_{\mathcal{E}_k\circ F_{f_0}}^{n}(x), G_{\mathcal{E}_k\circ F_{f_0}}^{n}(y)\right) >\delta .$$
\end{itemize}
\end{theorem}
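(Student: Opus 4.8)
The plan is to establish the three Devaney properties directly, the common engine of all of them being a one–step \emph{reachability} (controllability) lemma for the internal state. Writing a point of $\mathcal{X}$ as $(x,m)$ with $x\in\mathds{B}^\mathsf{N}$ the internal state and $m=(m^0,m^1,\dots)\in\mathcal{S}_\mathsf{N}$ the block stream, one iteration reads $G_{\mathcal{E}_k\circ F_{f_0}}(x,m)=\bigl(\mathcal{E}_k(F_{f_0}(x,m^0)),\sigma(m)\bigr)$. The key observation is that, for every fixed state $x$, the map $b\longmapsto \mathcal{E}_k\bigl(F_{f_0}(x,b)\bigr)$ is a bijection of $\mathds{B}^\mathsf{N}$: indeed $F_{f_0}(x,b)_j=\overline{x_j\oplus b_j}$ depends bijectively on $b$ for fixed $x$, and $\mathcal{E}_k$ is invertible, its inverse being $\mathcal{D}_k$. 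Hence from any current state, and for any target state, there is exactly one message block driving the state to that target in a single step. First I would record the geometry of $d$: choosing the radius below $1$ forces $d_e=0$, i.e. the states coincide exactly, and then $d_m$ forces the first $N_0$ blocks of the two streams to agree, with $N_0$ growing as the radius shrinks.

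For \emph{transitivity}, let $U,V$ be nonempty open sets and fix $a=(x_a,m_a)\in U$, $b=(x_b,m_b)\in V$. Choose $N_0$ so that any point whose state equals $x_a$ and whose stream matches $m_a$ on its first $N_0$ blocks lies in $U$. I would then build a single point $c=(x_a,m_c)$ whose orbit visits $V$: set the first $N_0$ blocks of $m_c$ equal to those of $m_a$ (so that $c\in U$), use the reachability lemma to pick one further block $\beta=m_c^{N_0}$ steering the state from its value after $N_0$ iterations to exactly $x_b$, and finally append $m_b$, i.e. $m_c^{N_0+1+i}=m_b^i$ for all $i$. With $k=N_0+1$ one gets $G_{\mathcal{E}_k\circ F_{f_0}}^{k}(c)=(x_b,\sigma^{k}m_c)=(x_b,m_b)=b\in V$, whence $G_{\mathcal{E}_k\circ F_{f_0}}^{k}(U)\cap V\neq\varnothing$.

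For \emph{regularity}, given $a=(x_a,m_a)$ and a radius, pick $N_0$ as above. I would construct a periodic point $p=(x_a,m_p)$ by letting $m_p$ agree with $m_a$ on its first $N_0$ blocks, choosing the block in position $N_0$ (via the reachability lemma) so that the state returns exactly to $x_a$ after $N_0+1$ iterations, and then repeating this length-$(N_0+1)$ pattern forever. The stream is then $(N_0+1)$-periodic and the state revisits $x_a$ after each period, so $p$ is a periodic point of $G_{\mathcal{E}_k\circ F_{f_0}}$ lying in the prescribed ball; the periodic points are therefore dense in $\mathcal{X}$.

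Since $\mathcal{X}$ is an infinite metric space, \emph{sensitive dependence on initial conditions} then follows automatically from regularity and transitivity by the theorem of Banks \emph{et al.}; alternatively it can be shown by hand, the shift component already separating two streams that agree on a long prefix but differ afterwards, and then produce a first-block discrepancy of fixed size after enough iterations. The heart of the argument, and the step I expect to be most delicate, is the reachability lemma together with the bookkeeping that lets one steer the internal state to an arbitrary target without perturbing the finitely many leading blocks that guarantee membership in the starting open set. Everything rests on the bijectivity of $b\mapsto \mathcal{E}_k(F_{f_0}(x,b))$, which in turn uses that $\mathcal{E}_k$ and $\mathcal{D}_k$ are mutually inverse and that $F_{f_0}(x,\cdot)$ is a bijection (in fact an involution) for each fixed state.
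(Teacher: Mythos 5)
Your proof is correct, but there is an important caveat about the comparison: this paper does not actually prove Theorem~\ref{th:chaos selon Devaney} at all — the theorem is imported, without proof, from the authors' earlier work \cite{abidi2016proving} (``Using this modeling, we have been able to prove that\dots''), so there is no in-paper argument to measure yours against, and your reconstruction is if anything more than the paper itself provides. Judged on its own merits, your argument is sound. The reachability lemma is the right engine: for fixed $x$, the map $b \mapsto F_{f_0}(x,b) = \left(\overline{x_j \oplus b_j}\right)_{j=1..\mathsf{N}}$ is an involution of $\mathds{B}^\mathsf{N}$, and $\varepsilon_k \circ \mathcal{D}_k = \mathrm{id}$ on the finite set $\mathds{B}^\mathsf{N}$ forces $\varepsilon_k$ to be a bijection, so $b \mapsto \varepsilon_k(F_{f_0}(x,b))$ is a bijection and any target state is reachable in one step. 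Your reading of the metric is also the intended one: you silently corrected the paper's typo (as literally written, $\delta(x,y)=1$ when $x=y$ makes $d_e$ count agreements and fail $d_e(x,x)=0$; the Hamming-distance reading is clearly meant), and the estimate that a point sharing the state and the first $N_0$ blocks of $a$ satisfies $d \leqslant 10^{-N_0}$ is exactly what places your constructed points inside the prescribed ball. With that, the transitivity point $c$ and the periodic point $p$ (periodic stream of period $N_0+1$, state steered back to $x_a$) check out, including the bookkeeping $G_{\mathcal{E}_k\circ F_{f_0}}^{N_0+1}(c)=(x_b,m_b)$ and $G_{\mathcal{E}_k\circ F_{f_0}}^{N_0+1}(p)=p$. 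One small gap to flag on sensitivity: invoking Banks \emph{et al.} requires, besides transitivity, dense periodic points and an infinite space, the \emph{continuity} of $G_{\mathcal{E}_k\circ F_{f_0}}$ on $(\mathcal{X},d)$, which you never verify (it holds, but it is a hypothesis of that theorem, and is itself one of the things proven in \cite{abidi2016proving}). Your by-hand alternative avoids this entirely and is the cleaner route: take $y$ agreeing with $x$ on the state and the first $N_0$ blocks but differing at block $N_0$; after $N_0+1$ iterations the injectivity of $b \mapsto \varepsilon_k(F_{f_0}(\cdot,b))$ forces the internal states apart, so $d \geqslant 1$ and $\delta = 1/2$ works uniformly — I would make that argument explicit rather than lean on Banks.
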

This result has been extended in~\cite{abidi2016quantitative}, in which both expansivity and sensibility of symmetric cyphers have been regarded in the case of the CBC mode of operation. However, all these results of qualitative and quantitative disorder have been stated on an exotic phase space $\mathcal{X}_\mathsf{N}$, equipped with a distance $d$ very different from the usual Euclidian one.
 Our objective is now to translate them in a more usual situation, namely the real line equipped with its usual order topology. To do so, a topological semi-conjugacy must be introduced. Such a formulation will make it possible to evaluate the Lyapunov exponent of the CBC mode,
as the latter will be described by a differentiable function on $\mathds{R}$.

\begin{figure}[t]
\begin{center}
  \subfigure[Function $x \to dist(x;1.5) $ on the interval
$(0;4)$.]{\includegraphics[scale=.27]{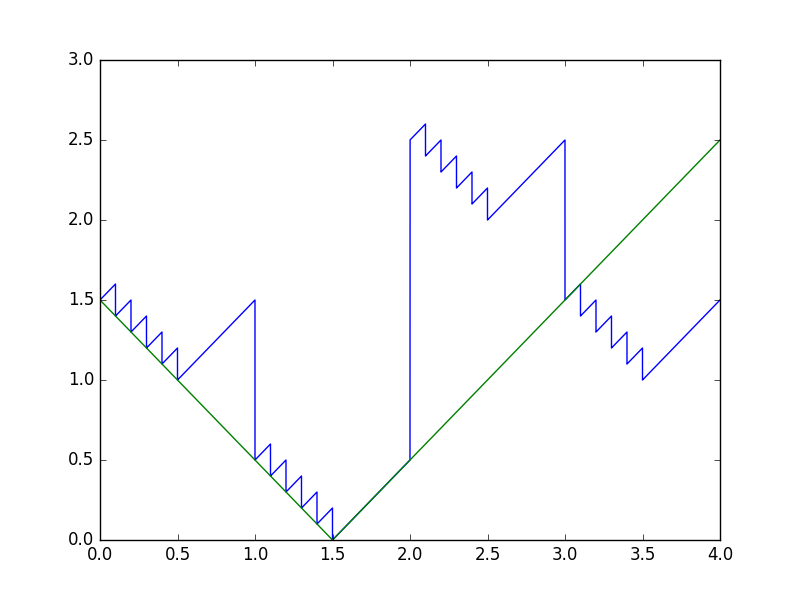}}\quad
  \subfigure[Function $x \to dist(x;1.9) $ on the interval
$(0;4)$.]{\includegraphics[scale=.27]{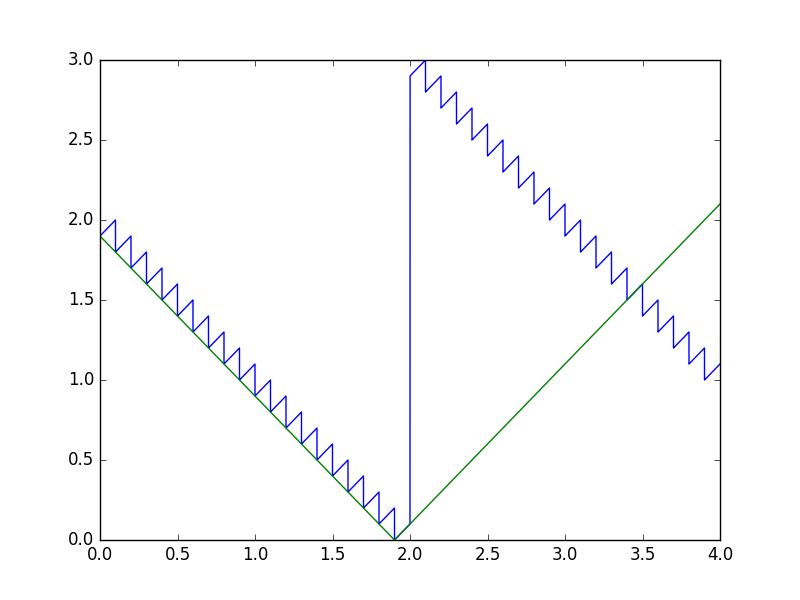}}
\end{center}
\caption{Comparison between $D$ (in blue) and the Euclidean distance (in
green).}
\label{fig:comparaison de distances}
\end{figure}



\section{A Topological Semi-conjugacy}
\label{sec:topological-semiconjugacy}
\subsection{The phase space is an interval of the real line}

\subsubsection{Toward a topological semi-conjugacy}

We show, by using a topological
semi-conjugacy, that CBC mode can be
described on a real interval. 
In what follows and for easy understanding, 
we will assume that $\mathsf{N} = 10$.
However, an  equivalent formulation of the following can be easily obtained  by replacing the base $10$ by any base $\mathsf{N}$.

\begin{definition}
The function $\varphi: \mathcal{S}_{10} \times\mathds{B}^{10}
\rightarrow \big[
0, 2^{10} \big[$ is defined by:
\begin{equation*}
 \begin{array}{cccl}
\varphi: & \mathcal{X}_{10} = \mathcal{S}_{10} \times\mathds{B}^{10}&
\longrightarrow & \big[ 0, 2^{10} \big[ \\
 & \left((S_0, S_1, \hdots ); (E_0, \hdots, E_9)\right) &
\longmapsto &
\varphi \left((S,E)\right)
\end{array}
\end{equation*}
where $(S,E) = \left((S_0, S_1, \hdots ); (E_0, \hdots, E_9)\right)$, and
$\varphi\left((S,E)\right)$ is the real number:
\begin{itemize}
\item whose integral part $e$ is $\displaystyle{\sum_{k=0}^9 2^{9-k}
E_k}$, that
is, the binary digits of $e$ are $E_0 ~ E_1 ~ \hdots ~ E_9$.
\item whose decimal part $s$ is equal to $s = 0,S_0~ S_1~ S_2~ \hdots =
\sum_{k=1}^{+\infty} 10^{-k} S^{k-1}.$ 
\end{itemize}
\end{definition}

\noindent $\varphi$ realizes the association between a point of $\mathcal{X}_{10}$
and a
real number into $\big[ 0, 2^{10} \big[$. We must now translate the CBC process $G_{\mathcal{E}_k\circ F_{f_0}}$ on this real interval. To do so, two intermediate
functions
over $\big[ 0, 2^{10} \big[$ must be introduced:

\begin{definition}
\label{def:e et s}
Let $x \in \big[ 0, 2^{10} \big[$ and:
\begin{itemize}
\item $e_0, \hdots, e_9$ the binary digits of the integral part of $x$:
$\displaystyle{\lfloor x \rfloor = \sum_{k=0}^{9} 2^{9-k} e_k}$.
\item $(s_k)_{k\in \mathds{N}}$ the digits of $x$, where the chosen
decimal
decomposition of $x$ is the one that does not have an infinite number of
9: 
$\displaystyle{x = \lfloor x \rfloor + \sum_{k=0}^{+\infty} s_k
10^{-k-1}}$.
\end{itemize}
$e$ and $s$ are thus defined as follows:
\begin{equation*}
\begin{array}{cccl}
e: & \big[ 0, 2^{10} \big[ & \longrightarrow & \mathds{B}^{10} \\
 & x & \longmapsto & (e_0, \hdots, e_9)
\end{array}
\end{equation*}
and
\begin{equation*}
 \begin{array}{cccc}
s: & \big[ 0, 2^{10} \big[ & \longrightarrow & \llbracket 0, 9
\rrbracket^{\mathds{N}} \\
 & x & \longmapsto & (s_k)_{k \in \mathds{N}}
\end{array}
\end{equation*}
\end{definition}

We are now able to define the function $g$, whose goal is to translate
the
CBC mode $G_{\mathcal{E}_k\circ F_{f_0}}$ on an interval of $\mathds{R}$.

\begin{definition}\label{def:function-g-on-R}
$g:\big[ 0, 2^{10} \big[ \longrightarrow \big[ 0, 2^{10} \big[$ is
defined by:
\begin{equation*}
\begin{array}{cccc}
g: & \big[ 0, 2^{10} \big[ & \longrightarrow & \big[ 0, 2^{10} \big[ \\
 & x & \longmapsto & g(x)
\end{array}
\end{equation*}
where g(x) is the real number of $\big[ 0, 2^{10} \big[$ defined bellow:
\begin{itemize}
\item its integral part is the number, encrypted by $\varepsilon_\mathsf{k}$, whose binary decomposition equal to $e_0',
\hdots,
e_9'$, with:
 \begin{equation*}
e_i' = \left\{
\begin{array}{ll}
e(x)_i & \textrm{ if } m_i^0 = 0\\
e(x)_i + 1 \textrm{ (mod 2)} & \textrm{ if } m_i^0 = 1\\
\end{array}
\right.
\end{equation*}
\item whose decimal part is $m_0^1, \hdots, m_9^1$, 
$m_0^2, \hdots, m_9^2,m_0^3, \hdots, m_9^3, \hdots$
\end{itemize}
\end{definition}

In other words, if $x = \displaystyle{\sum_{k=0}^{9} 2^{9-k} e_k + 
\sum_{k=0}^{+\infty} s^{k} ~10^{-k-1}}$, then:
\begin{equation*}
g(x) =
\displaystyle{\sum_{k=0}^{9} 2^{9-k} \varepsilon_\mathsf{k} (e_k + \delta(k,s_0) \textrm{ (mod
2)}) + 
\sum_{k=0}^{+\infty} s^{k+1} 10^{-k-1}}. 
\end{equation*}

\subsubsection{Defining a metric on $\big[ 0, 2^{10} \big[$}

Numerous metrics can be defined on the set $\big[ 0, 2^{10} \big[$, the
most
usual one being the Euclidean distance
$\Delta(x,y) = \sqrt{y^2-x^2}$.
This Euclidean distance does not reproduce exactly the notion of
proximity
induced by our first distance $d$ on $\X$. Indeed $d$ is finer than
$\Delta$.
This is the reason why we have to introduce the following metric:

\begin{definition}
Let $x,y \in \big[ 0, 2^{10} \big[$.
$D$ denotes the function from $\big[ 0, 2^{10} \big[^2$ to $\mathds{R}^
+$
defined by: $D(x,y) = D_e\left(e(x),e(y)\right) + D_s
\left(s(x),s(y)\right)$,
where:
\begin{center}
$\displaystyle{D_e(E,\check{E}) = \sum_{k=0}^\mathsf{9} \delta (E_k,
\check{E}_k)}$, ~~and~ $\displaystyle{D_s(S,\check{S}) = \sum_{k = 1}^
\infty
\dfrac{|s_k-\check{S}^k|}{10^k}}$.
\end{center}
\end{definition}

\begin{proposition}
$D$ is a distance on $\big[ 0, 2^{10} \big[$.
\end{proposition}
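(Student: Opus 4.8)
The plan is to verify the three metric axioms for $D$, exploiting the fact that it is the sum of two elementary distances pulled back along the decomposition maps $e$ and $s$ of Definition~\ref{def:e et s}. First I would treat the two summands separately. The term $D_e$ is simply the Hamming distance on binary strings of length $10$: it counts the positions at which the integral parts differ, so it is non-negative and symmetric, and the triangle inequality follows termwise from the discrete-metric inequality $\delta(a,c)\leqslant\delta(a,b)+\delta(b,c)$ summed over $k=0,\dots,9$. For $D_s$ I would first note that the defining series $\sum_{k\geqslant1}10^{-k}|s_k-\check s_k|$ converges, being dominated by $\sum_{k\geqslant1}9\cdot10^{-k}=1$, so that $D_s$ is well defined and finite on $\llbracket 0,9\rrbracket^{\mathds{N}}$; non-negativity and symmetry are immediate from the absolute value, while the termwise inequality $|a-c|\leqslant|a-b|+|b-c|$ together with the positivity of the weights $10^{-k}$ yields the triangle inequality for $D_s$.

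Next I would assemble these facts. Non-negativity and symmetry of $D=D_e+D_s$ are inherited at once from the two summands. For the triangle inequality, fixing $x,y,z\in\big[0,2^{10}\big[$ and applying the two separate triangle inequalities to the integral and decimal components gives
\begin{align*}
D(x,z) &= D_e\big(e(x),e(z)\big)+D_s\big(s(x),s(z)\big)\\
&\leqslant D_e\big(e(x),e(y)\big)+D_e\big(e(y),e(z)\big)+D_s\big(s(x),s(y)\big)+D_s\big(s(y),s(z)\big)\\
&= D(x,y)+D(y,z),
\end{align*}
which is the desired inequality.

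The delicate point, and the one I would argue most carefully, is the identity of indiscernibles. Since $D_e$ and $D_s$ are both non-negative, $D(x,y)=0$ forces simultaneously $D_e\big(e(x),e(y)\big)=0$ and $D_s\big(s(x),s(y)\big)=0$; the first equality yields $e(x)=e(y)$, i.e.\ the same integral part, and the second yields $s_k(x)=s_k(y)$ for every $k$, i.e.\ the same sequence of decimal digits. To deduce $x=y$ from this I need the map $x\mapsto\big(e(x),s(x)\big)$ to be injective, and this is exactly where the convention imposed in Definition~\ref{def:e et s}---selecting the decimal expansion that does not terminate in an infinite run of $9$'s---becomes essential: without it the representations $0{,}0\overline{9}$ and $0{,}1\overline{0}$ would encode the same real number through different digit sequences, destroying injectivity and hence separation. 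I therefore expect the main (though modest) obstacle to be a clean verification that, under this canonical-expansion convention, equality of the binary integral digits together with equality of all decimal digits reconstructs $x$ uniquely; once this is established, $D(x,y)=0\Rightarrow x=y$ follows and the three axioms together show that $D$ is a distance on $\big[0,2^{10}\big[$.
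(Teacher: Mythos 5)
Your proposal is correct and follows essentially the same route as the paper's own proof: both verify the three metric axioms by splitting $D$ into its components $D_e$ and $D_s$, inheriting non-negativity and symmetry termwise, deducing the triangle inequality from the corresponding inequalities for $\delta$ and $|\cdot|$, and obtaining separation from the fact that $D(x,y)=0$ forces equal integral parts and equal decimal digit sequences, hence $x=y$. The only difference is one of detail: you make explicit the convergence of the series defining $D_s$ and the role of the no-trailing-nines convention of Definition~\ref{def:e et s} in reconstructing $x$ from its digits, points the paper's terser proof leaves implicit.
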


\begin{proof}
The three axioms defining a distance must be checked.
\begin{itemize}
\item $D \geqslant 0$, because everything is positive in its definition.
If
$D(x,y)=0$, then $D_e(x,y)=0$, so the integral parts of $x$ and $y$ are
equal
(they have the same binary decomposition). Additionally, $D_s(x,y) = 0$,
then
$\forall k \in \mathds{N}^*, s(x)^k = s(y)^k$. In other words, $x$ and
$y$ have
the same $k-$th decimal digit, $\forall k \in \mathds{N}^*$. And so $x=y
$.
\item $D(x,y)=D(y,x)$.
\item Finally, the triangular inequality is obtained due to the fact
that both
$\delta$ and $|x-y|$ satisfy it.
\end{itemize}
\end{proof}

The convergence of sequences according to $D$ is not the same than the
usual convergence related to the Euclidean metric. For 
instance, if $x^n \to x$ according to $D$, then necessarily 
the integral part of each $x^n$ is equal to the integral 
part of $x$ (at least after a given threshold), and the decimal part of $x^n$ corresponds to the one of $x$ ``as far as required''.
To illustrate this fact, a comparison between $D$ and the Euclidean
distance is
given in Figure~\ref{fig:comparaison de distances}. These illustrations show that $D$ is richer and more refined than the Euclidean distance, and thus is more precise.

\subsubsection{The semi-conjugacy}

It is now possible to define a topological semi-conjugacy between
$\mathcal{X}_\mathsf{N}$ 
and an interval of $\mathds{R}$ which makes possible to 
translate the action of the CBC encryption on a message in 
the form of a recurrent sequence on the interval $\big[ 0, 2^{10} \big[$.

\begin{theorem}
CBC mode on the phase space $\mathcal{X}_\mathsf{N}$ are simple
iterations on
$\mathds{R}$, which is illustrated by the semi-conjugacy of the diagram
below:
\begin{equation*}
\begin{CD}
\left(~\mathcal{S}_{10} \times\mathds{B}^{10}, d~\right) @>G_{\mathcal{E}_k\circ F_{f_0}}>>
\left(~\mathcal{S}_{10} \times\mathds{B}^{10}, d~\right)\\
    @V{\varphi}VV                    @VV{\varphi}V\\
\left( ~\big[ 0, 2^{10} \big[, D~\right)  @>>g> \left(~\big[ 0, 2^{10}
\big[,
D~\right)
\end{CD}
\end{equation*}
\end{theorem}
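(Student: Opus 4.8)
The plan is to verify the three conditions that define a topological semi-conjugacy (continuity of $\varphi$, surjectivity of $\varphi$, and commutativity of the square $\varphi \circ G_{\mathcal{E}_k\circ F_{f_0}} = g \circ \varphi$), treating each in turn. The product structure of $\mathcal{X}_{10} = \mathcal{S}_{10}\times\mathds{B}^{10}$ — with the internal state coded in the integral part of $\varphi$ and the message coded in the decimal part — is what makes all three tractable, since the state factor and the message factor never interact.

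For surjectivity, I would first note that the extraction maps $e$ and $s$ furnish a right inverse of $\varphi$: given any $x \in \big[0, 2^{10}\big[$, the pair $(s(x), e(x))$ lies in $\mathcal{S}_{10}\times\mathds{B}^{10}$, and reading off the definitions gives $\varphi\big(s(x), e(x)\big) = x$. The only delicate point is the ambiguity of decimal expansions; this is settled by the convention selecting the representation \emph{without} an infinite tail of $9$'s, which is exactly what makes $e$ and $s$ well defined. For continuity, the tailored metric $D$ does the work, and this is precisely why $D$ was introduced in place of the Euclidean distance $\Delta$. I would show that $\varphi$ respects the decomposition factor by factor: for $X=(S,E)$ and $Y=(\check S,\check E)$ one has $e(\varphi(X))=E$ and $s(\varphi(X))$ is the digit coding of $S$, so that $D_e$ reproduces the Hamming term $d_e$ of $d$ exactly, while $D_s$ reproduces the weighted message term $d_m$ up to the multiplicative constant $\tfrac{9}{\mathsf{N}}$. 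Thus $\varphi$ sends $d$-distances to $D$-distances in a Lipschitz-controlled way, and uniform continuity follows: given $\varepsilon>0$, a small enough $\eta$ forces the internal states to coincide and forces agreement of sufficiently many leading message blocks, whence $D(\varphi(X),\varphi(Y))<\varepsilon$.

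The main step is commutativity, which I would establish by fixing $X = \big((S_0,S_1,\ldots);(E_0,\ldots,E_9)\big)$ and chasing it around the square. Along the top-then-right path, $G_{\mathcal{E}_k\circ F_{f_0}}$ replaces the state by $\mathcal{E}_k\!\left(F_{f_0}(i(S),E)\right)$ and the message by $\sigma(S)$; unfolding $F_{f_0}$ with $f_0$ the vectorial negation shows its $i$-th bit is $E_i$ flipped exactly when the $i$-th bit of the first block $S_0 = i(S)$ equals $1$. Applying $\varphi$ then puts $\mathcal{E}_k$ of this vector in the integral part and the digits of $\sigma(S)$ in the decimal part. Along the left-then-bottom path, $g$ of Definition~\ref{def:function-g-on-R} reads $e(\varphi(X))=E$ and the first block $m^0 = S_0$ from the decimal part, forms the identical bits $e_i' = E_i \oplus m_i^0$, encrypts them by $\varepsilon_\mathsf{k}$ to produce the new integral part, and shifts the decimal digits to $m^1,m^2,\ldots$. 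Since the two integral parts agree (the same encryption applied to the same XORed vector) and the two decimal parts agree (both equal the digit coding of $\sigma(S)$), the two real numbers coincide, giving $\varphi\big(G_{\mathcal{E}_k\circ F_{f_0}}(X)\big) = g(\varphi(X))$.

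I expect the main obstacle to be bookkeeping in this last step rather than any conceptual difficulty. One must keep the expansion convention consistent so that shifting the decimal digits corresponds \emph{exactly} to the block shift $\sigma$ with no carry or borrow artefacts, and one must align the bit-indexing of the XOR inside $F_{f_0}$ with the bit-flip rule $e_i' = E_i \oplus m_i^0$ defining $g$ (noting that $\mathcal{E}_k$ and $\varepsilon_\mathsf{k}$ denote the same keyed encryption). Once the two factors are decoupled — the integral part carrying only the state transition $E \mapsto \mathcal{E}_k(E \oplus S_0)$ and the decimal part carrying only the shift — each of the two required identities reduces to a short verification.
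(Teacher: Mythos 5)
Your overall plan---verify continuity, surjectivity, and commutativity of the square---is the right skeleton, and it is considerably more complete than the paper's own proof, which consists of the single sentence that $\varphi$ ``has been constructed in order to be continuous and onto'' and never addresses commutativity at all. Your surjectivity argument (a right inverse via $e$ and $s$, made well defined by the no-infinite-tail-of-9's convention) and your continuity argument ($D_e$ reproducing $d_e$, $D_s$ reproducing $d_m$ up to a constant) are exactly what that sentence of the paper gestures at, and they are sound.

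The genuine problem sits in the commutativity chase, at the step you yourself single out as the crux. You assert that the $i$-th bit of $F_{f_0}\left(i(S),E\right)$ is $E_i$ flipped exactly when the $i$-th bit of $S_0$ equals $1$. The paper's formula says the opposite: since $F_{f_0}(x,m) = \left(x_j m_j + f_0(x)_j \overline{m_j}\right)_{j=1..\mathsf{N}}$, the $j$-th output bit equals $x_j$ when $m_j = 1$ and $\overline{x_j}$ when $m_j = 0$, i.e.\ bits are flipped exactly where the message bit is $0$ (the paper's prose confirms this: components are replaced ``for all $j$ such that $m_j=0$''). Meanwhile $g$ flips $e(x)_i$ exactly where $m_i^0 = 1$. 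Taking the definitions literally, the top-then-right path produces the integral part $\mathcal{E}_k\left(E \oplus \overline{S_0}\right)$ while the left-then-bottom path produces $\varepsilon_\mathsf{k}\left(E \oplus S_0\right)$, and the diagram does \emph{not} commute. Your reading silently replaces the paper's formula by the intended XOR (which the paper also announces in prose when introducing $F_{f_0}$); a complete proof must surface and resolve this inconsistency---say, by correcting the definition of $F_{f_0}$ to flip on $m_j=1$---rather than assume it away. A similar bookkeeping gap affects the decimal parts: $\varphi$ codes each whole block $S_k \in \llbracket 0, 2^{10}-1 \rrbracket$ as a single decimal ``digit,'' whereas the image of $g$ codes the bit strings $m_0^1,\hdots,m_9^1,m_0^2,\hdots$ as digits, so your claim that the two paths both ``equal the digit coding of $\sigma(S)$'' holds only once a single, consistent convention for writing blocks as digit sequences has been fixed on both sides.
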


\begin{proof}
$\varphi$ has been constructed in order to be continuous and onto.
\end{proof}

In other words, $\mathcal{X}_\mathsf{N}$ is approximately equal to $\big[ 0, 2^
\mathsf{N}
\big[$.

\subsubsection{Comparing the metrics of $\big[ 0, 2^\mathsf{N} \big[$}

The two propositions below allow us to compare our two distances on $\big[ 0, 2^\mathsf{N} \big[$:

\begin{proposition}
The identity function Id: $\left(~\big[ 0, 2^\mathsf{N} \big[,\Delta~\right) \to \left(~\big[ 0, 2^\mathsf{N} \big[, D~\right)$ is not continuous. 
\end{proposition}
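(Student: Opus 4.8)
The plan is to disprove continuity by exhibiting a single point of $\big[ 0, 2^{10} \big[$ at which the identity map fails to be continuous, using the sequential characterization. Recall that $\mathrm{Id}$ is continuous at $x_0$ when, for every $\varepsilon > 0$, there is $\eta > 0$ such that $\Delta(x_0, y) < \eta$ forces $D(x_0, y) < \varepsilon$; equivalently, $\mathrm{Id}$ is continuous at $x_0$ if and only if every sequence converging to $x_0$ for the Euclidean distance $\Delta$ also converges to $x_0$ for $D$. It therefore suffices to produce one point $x_0$ and one sequence $(y_n)$ with $\Delta(x_0, y_n) \to 0$ but $D(x_0, y_n) \not\to 0$.

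The key observation is that the term $D_e$ in the definition of $D$ records the number of binary positions at which the integral parts $\lfloor x \rfloor$ and $\lfloor y \rfloor$ differ (a Hamming distance), a quantity that is insensitive to numerical proximity: two numbers lying on opposite sides of an integer are arbitrarily close for $\Delta$, yet their integral parts differ, so $D_e \geq 1$. Concretely, I would take $x_0 = 1$ and $y_n = 1 - \tfrac{1}{n}$ for $n \geq 2$. Then $\Delta(x_0, y_n) = \tfrac{1}{n} \to 0$, so $y_n \to x_0$ in $\big(\big[ 0, 2^{10} \big[, \Delta\big)$.

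On the other hand, $\lfloor x_0 \rfloor = 1$ while $\lfloor y_n \rfloor = 0$, so the binary digit vectors $e(x_0)$ and $e(y_n)$ differ in at least their last coordinate. Hence $D_e\big(e(x_0), e(y_n)\big) \geq 1$, and since $D_s$ is nonnegative we obtain $D(x_0, y_n) \geq 1$ for every $n$. Consequently $D(x_0, y_n) \not\to 0$ although $\Delta(x_0, y_n) \to 0$, which shows that $\mathrm{Id}$ is not continuous at $x_0 = 1$, and a fortiori not continuous on $\big[ 0, 2^{10} \big[$.

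I expect the only genuine difficulty to be conceptual rather than computational: recognizing that the discontinuity is forced at the integer boundaries, where the integral part — and therefore the discrete term $D_e$ — jumps while the Euclidean distance stays small. Once this jump is isolated, the verification is immediate and needs no estimate on the decimal part, since $D \geq D_e$ already yields the lower bound. The same phenomenon occurs at the ``decimal carry'' points, for instance when comparing $0.5$ with numbers just below it, so one could alternatively build the counterexample there; the integer boundary is merely the cleanest choice.
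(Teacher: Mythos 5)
Your proof is correct and follows essentially the same approach as the paper: both exploit the fact that a sequence approaching an integer from below has vanishing Euclidean distance to that integer while the integral parts differ, forcing $D_e \geqslant 1$ and hence $D \geqslant 1$ along the sequence. The paper uses $x^n = 1.99\ldots9$ (with $n$ nines) converging to $2$, whereas you use $y_n = 1 - \tfrac{1}{n}$ converging to $1$; this is an immaterial difference in the choice of counterexample.
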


\begin{proof}
The sequence $x^n = 1,999\hdots 999$ constituted by $n$ 9's as digits, is such that:
\begin{itemize}
\item $\Delta (x^n,2) \to 0.$
\item But $D(x^n,2) \geqslant 1$, so $D(x^n,2)$ does not converge to 0.
\end{itemize}

The sequential characterization of the continuity allows us to conclude the proposition.
\end{proof}

A contrario:

\begin{proposition}
Id: $\left(~\big[ 0, 2^\mathsf{N} \big[,D~\right) \to \left(~\big[ 0, 2^\mathsf{N} \big[, \Delta ~\right)$ is continuous. 
\end{proposition}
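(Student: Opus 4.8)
The plan is to establish continuity via the sequential characterization of continuity, exactly as in the preceding proposition. So I would fix an arbitrary $x \in \big[ 0, 2^\mathsf{N} \big[$ together with a sequence $(x^n)$ satisfying $D(x^n, x) \to 0$, and aim to prove that $\Delta(x^n, x) = |x^n - x| \to 0$. Since $D = D_e + D_s$ is the sum of two nonnegative quantities, the hypothesis $D(x^n,x) \to 0$ forces both $D_e\big(e(x^n), e(x)\big) \to 0$ and $D_s\big(s(x^n), s(x)\big) \to 0$ separately, and I would treat these two contributions in turn.

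First I would deal with the integral part, and this is where the essential phenomenon lives. The quantity $D_e\big(e(x^n),e(x)\big) = \sum_{k=0}^{9} \delta\big(e(x^n)_k, e(x)_k\big)$ merely counts the binary digits on which the integral parts of $x^n$ and $x$ disagree, so it takes values in $\{0, 1, \dots, \mathsf{N}\}$. A sequence of nonnegative integers that converges to $0$ is eventually equal to $0$; hence there is an index $N_0$ such that $\lfloor x^n \rfloor = \lfloor x \rfloor$ for every $n \geqslant N_0$. The hard part is precisely this step: the floor function is badly discontinuous for the Euclidean topology, yet the discreteness of $D_e$ pins the integral part down after finitely many terms, so the jump that broke continuity in the reverse direction (the sequence $1,999\hdots9 \to 2$) simply cannot arise under $D$-convergence.

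Next I would handle the decimal part. For $n \geqslant N_0$ the integral parts cancel, so $x^n - x = \sum_{k=0}^{+\infty}\big(s(x^n)_k - s(x)_k\big)\,10^{-k-1}$, and the triangle inequality applied to this absolutely convergent series yields $|x^n - x| \leqslant \sum_{k=0}^{+\infty} \big|s(x^n)_k - s(x)_k\big|\,10^{-k-1}$, which is controlled by $D_s\big(s(x^n),s(x)\big)$ up to the indexing convention used in its definition. Since $D_s\big(s(x^n),s(x)\big) \to 0$, I conclude that $|x^n - x| \to 0$, that is $\Delta(x^n, x) \to 0$.

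Combining the two parts, every $D$-convergent sequence is $\Delta$-convergent to the same limit, so the identity map is continuous from $\big(\big[0,2^\mathsf{N}\big[, D\big)$ to $\big(\big[0,2^\mathsf{N}\big[, \Delta\big)$ by the sequential characterization. I expect the only delicate point to be the integral-part argument above; once the floors are forced to coincide, the decimal estimate is a routine bound on a geometric-type tail and needs no further work.
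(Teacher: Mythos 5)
Your proof is correct and follows essentially the same route as the paper's: both exploit the fact that $D_e$ is integer-valued to force the integral parts to coincide after a finite rank, and then use $D_s \to 0$ to control the decimal contribution. Your only deviation is cosmetic --- you bound $|x^n - x|$ directly by the series defining $D_s$ via the triangle inequality, whereas the paper argues that the first $k$ decimal digits must eventually agree; both yield $\Delta(x^n,x)\to 0$ at once.
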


\begin{proof}
On the one hand, if $D(x^n,x) \to 0$, then $D_e(x^n,x) = 0$ at least after a given rank, because $D_e$ produces only integers. So, after a given rank, the whole integral parts of $x^n$ are equal to the one of $x$. 

On the other hand, $D_s(x^n, x) \to 0$, so $\forall k \in \mathds{N}^*, \exists N_k \in \mathds{N}, n \geqslant N_k \Rightarrow D_s(x^n,x) \leqslant 10^{-k}$. Which means that for all $k$, it exists a rank  $N_k$ after which all the $x^n$'s have the same $k$ first digits, which are the ones of $x$. We can deduce from all these aspects that $\Delta(x^n,x) \to 0$, which leads to the claimed result.
\end{proof}

We can conclude from the previous propositions that the introduced metric is more precise than the Euclidean distance. In other words:

\begin{proposition}
The distance $D$ is finer than the Euclidean distance $\Delta$.
\end{proposition}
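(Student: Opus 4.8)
The plan is to read the claim through the standard topological comparison of metrics: a distance $D$ is \emph{finer} than a distance $\Delta$ on the same underlying set exactly when the topology $\tau_D$ it generates contains the topology $\tau_\Delta$, and \emph{strictly} finer when this containment is proper. The whole proof then reduces to detecting these inclusions, for which I would invoke the elementary and standard fact that the identity map $\mathrm{Id}: (X,\tau_1) \to (X,\tau_2)$ is continuous if and only if $\tau_1$ is finer than $\tau_2$ (continuity of the identity runs \emph{from} the finer topology \emph{to} the coarser one).

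With this dictionary in hand, the two preceding propositions already furnish everything needed, so the argument is essentially a direct assembly. First, the proposition asserting that $\mathrm{Id}: (\big[ 0, 2^\mathsf{N} \big[, D) \to (\big[ 0, 2^\mathsf{N} \big[, \Delta)$ is continuous translates at once into $\tau_\Delta \subseteq \tau_D$; that is, every $\Delta$-open set is also $D$-open, so $D$ is at least as fine as $\Delta$. Second, the proposition asserting that $\mathrm{Id}: (\big[ 0, 2^\mathsf{N} \big[, \Delta) \to (\big[ 0, 2^\mathsf{N} \big[, D)$ is \emph{not} continuous translates into $\tau_D \not\subseteq \tau_\Delta$, exhibiting, via the explicit sequence $x^n = 1,99\ldots 9$ used in its proof, a $D$-open set that fails to be $\Delta$-open. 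Combining the two gives $\tau_\Delta \subsetneq \tau_D$, which is precisely the statement that $D$ is strictly finer than $\Delta$.

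There is no genuine obstacle here: no new estimate or computation is required, since the real work was already discharged in establishing the two identity-map continuity facts. The only point demanding care is purely a matter of bookkeeping, namely keeping the direction of the implications straight, because it is the map \emph{out of} $(\big[ 0, 2^\mathsf{N} \big[, D)$ whose continuity certifies that $D$ is the finer metric, while it is the \emph{failure} of continuity of the reverse identity that certifies the strictness of the inclusion. Getting these two directions the wrong way round would invert the conclusion, so I would state the characterization explicitly before applying it.
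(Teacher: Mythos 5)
Your proposal is correct and matches the paper's own reasoning: the paper likewise derives this proposition directly from the two preceding identity-map propositions (continuity of $\mathrm{Id}$ from $D$ to $\Delta$ giving $\tau_\Delta \subseteq \tau_D$, and non-continuity in the reverse direction giving strictness), offering no separate argument beyond that assembly. Your explicit statement of the characterization of fineness via continuity of the identity, with the directions carefully tracked, is exactly the bookkeeping the paper leaves implicit.
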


This proposition can be reformulated as follows:
\begin{itemize}
\item The topology generated by $\Delta$ is inside the one generated by $D$.
\item $D$ has more open sets than $\Delta$.
\item Figuratively, $D$ allows a better observation, leading to more details than $\Delta$.
\item Finally, it is harder to converge with the topology $\tau_D$ generated by $D$, than with the one generated by $\Delta$, and denoted $\tau_\Delta$.
\end{itemize}

\subsubsection{Impact of the topology}
To alleviate notations, let us denote by $\mathcal{X}_\tau$ the topological space $\left(\mathcal{X},\tau\right)$, and by $\mathcal{V}_\tau (x)$ the set of all neighborhoods of $x$ when considering the $\tau$ topology. When there is no ambiguity, we will simply use the notation $\mathcal{V} (x)$.

\begin{theorem}
\label{Th:chaos et finesse}
Let $\mathcal{X}_\mathsf{N}$ be a set, and $\tau, \tau'$ two topologies on $\mathcal{X}_\mathsf{N}$ such that $\tau'$ is finer than $\tau$. Let $f:\mathcal{X} \to \mathcal{X}_\mathsf{N}$ be a function continuous for both $\tau$ and $\tau'$.

If $(\mathcal{X}_{\tau'},f)$ is chaotic according to Devaney, then $(\mathcal{X}_\tau,f)$ is chaotic too.
\end{theorem}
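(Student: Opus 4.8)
The plan is to verify each of the three Devaney properties for $(\mathcal{X}_\tau, f)$ separately, exploiting the relationship $\tau \subseteq \tau'$ (that is, $\tau'$ finer means every $\tau$-open set is $\tau'$-open) together with the continuity of $f$ in both topologies. The guiding intuition is that a finer topology has more open sets and smaller neighborhoods, so dynamical ``spreading'' witnessed in $\tau'$ should be inherited by the coarser $\tau$, where the open sets are fewer and larger.

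\textbf{Transitivity and regularity.} First I would handle topological transitivity. Let $U, V$ be nonempty $\tau$-open sets. Since $\tau \subseteq \tau'$, both $U$ and $V$ are also $\tau'$-open. By transitivity of $(\mathcal{X}_{\tau'}, f)$, there exists $k>0$ with $f^k(U) \cap V \neq \varnothing$. This conclusion is purely set-theoretic — it makes no reference to the topology — so the very same $k$ works for $\tau$, giving transitivity of $(\mathcal{X}_\tau, f)$. Next, for regularity (density of periodic points), I would take any nonempty $\tau$-open set $U$; again $U$ is $\tau'$-open, so by regularity in $\tau'$ it contains a periodic point $p$. Since being periodic is a property of $f$ alone and does not depend on the topology, $p$ is the desired periodic point in $U$, establishing that the periodic points are $\tau$-dense.

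\textbf{Sensitive dependence.} This is the property requiring the most care, since the sensitivity constant $\delta$ and the notion of neighborhood are both tied to the topology, and here one cannot simply reuse the $\tau'$ witnesses verbatim. I would argue as follows: fix the sensitivity constant $\delta > 0$ coming from $(\mathcal{X}_{\tau'}, f)$ and keep the same $\delta$. Let $x \in \mathcal{X}_\mathsf{N}$ and let $V \in \mathcal{V}_\tau(x)$ be any $\tau$-neighborhood. The difficulty is that a $\tau$-neighborhood need not be a $\tau'$-neighborhood in the naive sense — but in fact it is: since $\tau \subseteq \tau'$, any $\tau$-open set containing $x$ is also $\tau'$-open, so $V$ is likewise a $\tau'$-neighborhood of $x$. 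Applying sensitivity in $\tau'$ to this $V$ yields a point $y \in V$ and an integer $n>0$ with the separation $d(f^n(x), f^n(y)) > \delta$; because the separation is measured by the underlying metric/quantity and not by the topology, the identical $y$, $n$, and $\delta$ witness sensitivity for $(\mathcal{X}_\tau, f)$.

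\textbf{Main obstacle and assembly.} The conceptual crux, and the step I expect to be the main obstacle, is recognizing precisely in which direction the inclusion $\tau \subseteq \tau'$ pushes each property: every $\tau$-open (or $\tau$-neighborhood) datum is automatically a valid $\tau'$-datum, which lets me \emph{invoke} the hypotheses of chaos in $\tau'$, while the \emph{conclusions} (an intersection is nonempty, a point is periodic, a distance exceeds $\delta$) are topology-free and transfer back down to $\tau$ for free. One must be careful not to confuse this with the opposite (and false) direction. Having verified transitivity, regularity, and sensitive dependence for $(\mathcal{X}_\tau, f)$, the conclusion that $(\mathcal{X}_\tau, f)$ is chaotic according to Devaney follows immediately from the definition, completing the proof.
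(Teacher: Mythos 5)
Your first two paragraphs are correct and essentially coincide with the paper's own proof: the paper establishes $\tau$-transitivity by noting that $\tau$-open sets are $\tau'$-open and the nonemptiness of $\omega_1 \cap f^{(n)}(\omega_2)$ is topology-free, and it establishes regularity by observing that any $V \in \mathcal{V}_\tau(x)$ contains a $\tau$-open (hence $\tau'$-open) set around $x$, so $V$ is also a $\tau'$-neighborhood and therefore contains a periodic point. Your open-set phrasing of density is an equivalent reformulation of the paper's neighborhood phrasing.

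Your third paragraph, however, contains a genuine error, and it is precisely the point where you claimed to be most careful. You assert that the separation $d(f^n(x),f^n(y)) > \delta$ is ``measured by the underlying metric and not by the topology,'' so the same witnesses transfer. But the theorem's hypotheses provide no single underlying metric: there are two topologies, and in the paper's intended application they are generated by two \emph{different} metrics, namely $D$ and the Euclidean distance $\Delta$. Sensitivity of $(\mathcal{X}_{\tau'},f)$ gives a separation in the metric inducing $\tau'$ (here $D$), whereas sensitivity of $(\mathcal{X}_\tau,f)$ must be a separation in the metric inducing $\tau$ (here $\Delta$), and the implication fails: for instance $x = 1.\underbrace{99\ldots9}_{n}$ and $y=2$ satisfy $D(x,y)\geqslant 1$ while $\Delta(x,y)$ is arbitrarily small, so a $D$-separation by $\delta$ yields no Euclidean separation at all. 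The paper sidesteps this entirely: its proof checks \emph{only} transitivity and regularity, which are purely topological, relying (implicitly) on the classical result of Banks \emph{et al.} that in a metric space these two conditions already imply sensitive dependence on initial conditions. The correct repair of your proof is to delete your third paragraph and either adopt that two-condition formulation of Devaney's chaos or invoke the Banks \emph{et al.} theorem to recover sensitivity in the $\tau$-metric; as written, your transfer of $\delta$, $y$, and $n$ is invalid whenever $\tau \neq \tau'$.
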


\begin{proof}
Let us firstly introduce the transitivity of $(\mathcal{X}_\tau,f)$.

Let $\omega_1, \omega_2$ be two open sets of $\tau$. Then $\omega_1, \omega_2 \in \tau'$, as $\tau'$ is finer than $\tau$. But $f$ is $\tau'-$transitive, so we can deduce that $\exists n \in \mathds{N}, \omega_1 \cap f^{(n)}(\omega_2) = \varnothing$. As a consequence, $f$ is $\tau-$transitive.

Let us now establish the regularity of $(\mathcal{X}_\tau,f)$, \emph{i.e.}, for all $x \in \mathcal{X}_\mathsf{N}$, and for all $\tau-$neighborhood $V$ of $x$, a periodic point for $f$ can be found in $V$.

Let $x \in \mathcal{X}_\mathsf{N}$ and $V \in \mathcal{V}_\tau (x)$ a $\tau-$neighborhood of $x$. By definition of the neighborhood notion, $\exists \omega \in \tau, x \in \omega \subset V$.

But $\tau \subset \tau'$, so $\omega \in \tau'$, and as a consequence, $V \in \mathcal{V}_{\tau'} (x)$. As $(\mathcal{X}_{\tau'},f)$ is regular, it exists a periodic point for $f$ in $V$, and the regularity of $(\mathcal{X}_\tau,f)$ is proven.
\end{proof}

\subsection{CBC mode described as a real function}
We will now show that the $g$  function is a piecewise linear one:
it is
linear on each interval having the form $\left[ \dfrac{n}{10},
\dfrac{n+1}{10}\right[$, $n \in \llbracket 0;2^{10}\times 10 \rrbracket$
and its
slope is equal to 10. 

\begin{proposition}
\label{Prop:derivabilite des ICs}
CBC mode $g$ defined on $\mathds{R}$ have derivatives of all
orders on
$\big[ 0, 2^{10} \big[$, except on the 10241 points in $I$ defined by
$\left\{
\dfrac{n}{10} ~\big/~ n \in \llbracket 0;2^{10}\times 10\rrbracket
\right\}$.

Furthermore, on each interval of the form $\left[ \dfrac{n}{10},
\dfrac{n+1}{10}\right[$, with $n \in \llbracket 0;2^{10}\times 10
\rrbracket$,
$g$ is a linear function, having a slope equal to 10: $\forall x \notin
I,
g'(x)=10$.
\end{proposition}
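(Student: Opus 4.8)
The plan is to read off the piecewise structure directly from the explicit expression for $g$ stated just after Definition~\ref{def:function-g-on-R}, by treating separately the integral and the decimal part of $g(x)$ on a fixed subinterval $\left[\frac{n}{10},\frac{n+1}{10}\right[$. First I would perform the Euclidean division $n = 10q + r$ with $0 \leqslant r \leqslant 9$, so that every $x \in \left[\frac{n}{10},\frac{n+1}{10}\right[$ satisfies $\lfloor x \rfloor = q$ and has first decimal digit $s_0 = r$. This is the crucial reduction: on such an interval, the two pieces of data governing the integral part of $g(x)$ --- namely the binary digits $e_0,\dots,e_9$ of $\lfloor x\rfloor$ and the selecting first digit $s_0$ --- are \emph{constant}. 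By Definition~\ref{def:function-g-on-R}, the integral part of $g(x)$ is obtained from these two constant data by flipping the selected bits of $\lfloor x\rfloor$ and then applying $\varepsilon_\mathsf{k}$; hence it equals a single constant integer $C_n \in \llbracket 0, 2^{10}-1\rrbracket$ throughout the interval.

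The second step is to compute the decimal part of $g(x)$. By construction $g$ shifts the decimal expansion one digit to the left, so this decimal part is $\sum_{k=0}^{+\infty} s_{k+1}\,10^{-k-1}$. Writing $x - \lfloor x \rfloor = \sum_{k=0}^{+\infty} s_k\,10^{-k-1}$ (with the no-infinite-$9$'s convention of Definition~\ref{def:e et s}) and multiplying by $10$ gives $10(x-\lfloor x\rfloor) = s_0 + \sum_{k=0}^{+\infty} s_{k+1}\,10^{-k-1}$, whence the decimal part of $g(x)$ equals $10(x-q) - r = 10x - n$. Combining the two steps yields, for every $x$ in the interval, $g(x) = C_n + 10x - n$, an affine map of slope $10$. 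I would also verify that $10x - n$ stays in $[0,1[$ on the half-open interval, so that no carry corrupts the constant integral part $C_n$ and the decomposition remains consistent.

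From this affine expression, differentiability of all orders on each open subinterval is immediate: $g'=10$ and $g^{(j)}=0$ for $j\geqslant 2$. Since any $x \notin I$ lies in the interior of exactly one interval $\left[\frac{n}{10},\frac{n+1}{10}\right[$, the function is $C^\infty$ at $x$ with $g'(x)=10$, which is precisely the asserted conclusion. Finally I would count the exceptional points: as $n$ runs over $\llbracket 0; 2^{10}\times 10\rrbracket$ there are $2^{10}\times 10 + 1 = 10241$ values $\frac{n}{10}$, matching the statement.

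The part that needs the most care is the decimal-part computation together with the endpoint behaviour. The shift-equals-multiplication-by-$10$ identity must be justified against the fixed decimal convention, and one must confirm that the constant integral part $C_n$ absorbs no carry as $x \to \left(\frac{n+1}{10}\right)^-$; the half-open nature of the interval is exactly what guarantees this. At the points of $I$ the first digit $s_0$ (and, at integer values, also $\lfloor x\rfloor$) changes, so $C_n$ typically jumps and $g$ fails even to be continuous there --- which is why these points are genuine exclusions rather than artefacts of the argument.
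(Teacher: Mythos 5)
Your proof follows essentially the same route as the paper's: fix an interval $I_n=\left[\frac{n}{10},\frac{n+1}{10}\right[$, observe that the integral part $e(x)$ and the first digit $s_0$ are constant there (so the integral part of $g(x)$ is a constant $C_n$), and note that the decimal part is the shifted expansion $10x-n$, giving an affine map of slope $10$. Your version is in fact slightly more careful than the paper's (explicit Euclidean division $n=10q+r$, the closed formula $g(x)=C_n+10x-n$, the no-carry check on the half-open interval, and the count of $10241$ exceptional points), but the underlying argument is identical.
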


\begin{proof}
Let $I_n = \left[ \dfrac{n}{10}, \dfrac{n+1}{10}\right[$, with $n \in
\llbracket
0;2^{10}\times 10 \rrbracket$. All the points of $I_n$ have the same
integral
part $e$ and the same decimal part $s_0$: on the set $I_n$,  functions
$e(x)$
and $x \mapsto s(x)^0$ of Definition \ref{def:e et s} only depend on $n
$. So all
the images $g(x)$ of these points $x$:
\begin{itemize}
\item Have the same integral part, which is $\varepsilon_\mathsf{k}(e)$, except probably the bit
number
$s_0$. In other words, this integer has approximately the same binary
decomposition than $\varepsilon_\mathsf{k}(e)$, the sole exception being the digit $s_0$ (this
number is
then either $\varepsilon_\mathsf{k}(e+2^{10-s_0})$ or $\varepsilon_\mathsf{k}(e-2^{10-s_0})$, depending on the parity of
$s_0$,
\emph{i.e.}, it is equal to $\varepsilon_\mathsf{k}(e+(-1)^{s_0}\times 2^{10-s_0})$).
\item A shift to the left has been applied to the decimal part $y$,
losing by
doing so the common first digit $s_0$. In other words, $y$ has been
mapped into
$10\times y - s_0$.
\end{itemize}
To sum up, the action of $g$ on the points of $I$ is as follows: first,
make a
multiplication by 10, and second, add the same constant to each term,
which is
$\dfrac{1}{10}\left(\varepsilon_\mathsf{k}(e+(-1)^{s_0}\times 2^{10-s_0})\right)-s_0$.
\end{proof}

\begin{remark}
CBC mode is then an element of the large family of
functions that are both chaotic and piecewise linear (like the tent map~\cite{wang2009block,Guyeux2012}).
\end{remark}
We are now able to evaluate the Lyapunov exponent of our chaotic CBC mode, which is now described by the iterations on $\mathds{R}$
of the $g$ function introduced in Definition~\ref{def:function-g-on-R}.

\section{Disorder generated by CBC formulated on $\mathds{R}$}
\label{chpt:Chaos des itérations chaotiques sur R}

\subsection{Devaney's chaos on the real line}

We have established in~\cite{abidi2016proving} that the CBC mode of operation 
$\left(G_{\mathcal{E}_k\circ F_{f_0}}, \mathcal{X}_d\right)$ satisfies the Devaney's definition of chaos. From the semi-conjugacy, we can deduce that it is the case too for the 
mode of operation on $\mathds{R}$ with the order topology, as:
\begin{itemize}
\item $\left(G_{\mathcal{E}_k\circ F_{f_0}}, \mathcal{X}_d\right)$ and $\left(g, \big[ 0, 2^{10} \big[_D\right)$ are semi-conjugated by $\varphi$,
\item $\varphi\left(g, \big[ 0, 2^{10} \big[_D\right)$ is a chaotic system according to Devaney, because the semi-conjugacy preserves such a character~\cite{Formenti1998}.
\item But the topology generated by $D$ is finer than the one generated by the euclidean distance $\Delta$ -- which is the order topology~\cite{Guyeux2012}.
\item According to Theorem~\ref{Th:chaos et finesse}, we can deduce that the CBC mode of operation $g$ is chaotic, as defined by Devaney, for the usual order topology on $\mathds{R}$.
\end{itemize}

We can formulate this result as follows.

\begin{theorem}
\label{th:IC et topologie de l'ordre}
The CBC mode of operation $g$ on $\mathds{R}$ satisfies the Devaney's chaos property, when $\mathds{R}$ is equipped with its usual topology (the order one).
\end{theorem}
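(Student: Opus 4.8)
The plan is to propagate the Devaney chaos already established on the exotic phase space $(\mathcal{X}_{\mathsf{N}}, d)$ onto the real line in two conjugation/comparison steps, exactly along the four bullet points preceding the statement. I would organise the argument as a short chain of implications, each link being one of the transfer results proved above.

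First, I would take as input Theorem~\ref{th:chaos selon Devaney}, which asserts that $\left(G_{\mathcal{E}_k\circ F_{f_0}}, (\mathcal{X}_{\mathsf{N}},d)\right)$ is chaotic in the sense of Devaney, hence regular, topologically transitive and sensitive. Next, the semi-conjugacy diagram provides a continuous and onto map $\varphi$ with $\varphi \circ G_{\mathcal{E}_k\circ F_{f_0}} = g \circ \varphi$, so that $\left(g, ([0,2^{10}[, D)\right)$ is a \emph{factor} of this chaotic system. Proposition~\ref{prop:factor-system-chaos-preservation} then transfers regularity and transitivity to the factor, whence $\left(g, ([0,2^{10}[, D)\right)$ is itself Devaney-chaotic for the metric $D$.

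The second link passes from $D$ to the Euclidean metric $\Delta$. We have already shown that $D$ is finer than $\Delta$, i.e.\ $\tau_\Delta \subset \tau_D$, and that $\tau_\Delta$ coincides with the usual order topology on $[0,2^{10}[$. This is precisely the setting of Theorem~\ref{Th:chaos et finesse} with $\tau' = \tau_D$ and $\tau = \tau_\Delta$, whose conclusion is that chaos for the finer topology descends to chaos for the coarser one. Applying it to $g$ delivers the statement.

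The delicate point --- and the one I would spend most care on --- is the continuity hypothesis of Theorem~\ref{Th:chaos et finesse}, which requires $g$ to be continuous for \emph{both} topologies. Continuity for $\tau_D$ is not the problematic direction; the genuine difficulty is Euclidean continuity, which in fact fails: by Proposition~\ref{Prop:derivabilite des ICs}, $g$ has slope $10$ on each interval $\left[\frac{n}{10}, \frac{n+1}{10}\right[$, mapping it onto an interval of unit length, and it jumps at the breakpoints of $I$. I would resolve this by noting that the proof of Theorem~\ref{Th:chaos et finesse} uses only the inclusion $\tau_\Delta \subset \tau_D$ to carry transitivity and regularity across, so that these two properties transfer irrespective of Euclidean continuity; the remaining sensitivity on the infinite metric space $([0,2^{10}[, \Delta)$ then follows from the classical fact that topological transitivity together with density of periodic points forces sensitive dependence on initial conditions. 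Spelling out this reduction --- or, equivalently, confining the continuity requirement to the cofinite set $[0,2^{10}[ \setminus I$ --- is where the argument must be made genuinely rigorous rather than merely cited.
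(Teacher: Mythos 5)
Your first two paragraphs reproduce exactly the paper's own proof: the four bullet points preceding the theorem are precisely your chain (Theorem~\ref{th:chaos selon Devaney}, then Proposition~\ref{prop:factor-system-chaos-preservation} applied to the semi-conjugacy $\varphi$ to get chaos of $\left(g, \big[ 0, 2^{10} \big[_D\right)$, then the comparison $\tau_\Delta \subset \tau_D$, then Theorem~\ref{Th:chaos et finesse}). Where you go beyond the paper is in checking the hypotheses of Theorem~\ref{Th:chaos et finesse}, and your diagnosis there is correct and important: $g$ is \emph{not} continuous for the Euclidean topology (it is affine with slope $10$ on each interval $\big[ \frac{n}{10}, \frac{n+1}{10} \big[$ and maps $\big[ 0, 2^{10} \big[$ into itself, so it must jump at breakpoints), whereas Theorem~\ref{Th:chaos et finesse} explicitly assumes continuity for \emph{both} topologies. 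The paper applies that theorem without ever verifying this hypothesis, so you have located a genuine gap in the paper's own argument.

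However, your repair does not close that gap. Transferring transitivity and regularity through the inclusion $\tau_\Delta \subset \tau_D$ alone is fine --- the proof of Theorem~\ref{Th:chaos et finesse} indeed uses nothing else for those two properties. But your final step invokes the classical result of Banks et al.\ (transitivity plus dense periodic points imply sensitivity on an infinite metric space), and that theorem is proved for \emph{continuous} maps: its proof propagates closeness of orbits over finitely many iterates, which is exactly the continuity you have just shown to fail. You are therefore re-using the hypothesis whose absence motivated the detour, and the sensitivity component of Devaney chaos for $(g,\Delta)$ remains unproven; your parenthetical suggestion of ``continuity off the finite set $I$'' is not a known sufficient hypothesis for that result and would itself need proof. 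A workable fix is a direct argument exploiting the expanding piecewise affine structure: any subinterval $J$ of positive length, iterated under $g$, has images whose lengths are multiplied by $10$ at each step (by at least $5$ if a breakpoint forces you to retain only the larger half), so some iterate contains a full linearity interval, the next iterate contains an interval of length $1$, and hence some point of $J$ is sent at Euclidean distance at least $1/2$ from the corresponding iterate of $x$; this gives sensitivity with explicit constant $\delta = 1/2$. Without such an argument (or a weakening of the conclusion to transitivity plus regularity), your last step fails --- for the same reason the paper's does.
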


Indeed this result is weaker than Theorem~\ref{th:chaos selon Devaney}, that established the chaos of iterates on a finer topology. This can be explained in the following figurative manner. By using tools that are usual in the discrete dynamical system field, we can only observe disorder in the iterations of the CBC mode of operation (Theorem~\ref{th:IC et topologie de l'ordre}). And even if we considered an higher resolution, and more powerful tools than the ones that are commonly used, we still fail in finding order in such a chaos (Theorem~\ref{th:chaos selon Devaney}).

Result of Theorem~\ref{th:IC et topologie de l'ordre} is still precious. Indeed, we have started to formulate the mode of operation on a set different from the one commonly considered ($\mathcal{X}_\mathsf{N}$ instead of $\mathds{R}$), to be as close as possible to the computer machine (dealing with bounded integer), and so to prevent from losing disorder properties when switching from theory to computer program. It is to be feared that this introduction of discrete iterations can only be paid by the obtention of disorders of lower quality. In other words, perhaps we moved from a situation of a good disorder lost when computed on finite state machines, to a disorder preserved but of poor quality. Theorem~\ref{th:IC et topologie de l'ordre} shows exactly the contrary of this claim.

\subsection{Evaluation of the Lyapunov Exponent}

Let $\mathcal{L} = \left\{ x^0 \in \big[ 0, 2^{10} \big[ ~ \big/ ~ \forall n
\in \mathds{N}, x^n \notin I \right\}$, where $I$ is the set of points in the
real interval where $g$ is not differentiable (as it is explained in Proposition
\ref{Prop:derivabilite des ICs}). We have the following result.

\begin{theorem}
Let us consider the CBC mode of operation with block size of $N$. Then, $\forall x^0 \in \mathcal{L}$, its Lyapunov exponent is equal to $\lambda(x^0) = \ln (N)$.
\end{theorem}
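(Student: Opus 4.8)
The plan is to evaluate the Lyapunov exponent directly from its definition, exploiting the fact, established in Proposition~\ref{Prop:derivabilite des ICs}, that $g$ is piecewise linear with a constant slope. First I would invoke the base-$N$ counterpart of Proposition~\ref{Prop:derivabilite des ICs}: replacing the base $10$ used in the worked case by an arbitrary base $N$, the map $g$ remains piecewise linear, its non-differentiability set $I$ stays finite, and on every interval on which it is affine its slope equals $N$, so that $g'(x) = N$ for all $x \notin I$. Since $N \geqslant 2$, one has $|g'(x)| = N$ and hence $\ln|g'(x)| = \ln N$ at every point of differentiability.

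The second step is to use the very definition of $\mathcal{L}$ to guarantee that the Lyapunov sum is well posed. By construction, $x^0 \in \mathcal{L}$ means that the whole forward orbit $(x^n)_{n \in \mathds{N}}$ avoids the finite set $I$ on which $g$ fails to be differentiable. Consequently each point $x^{i-1}$ appearing in the sum $\sum_{i=1}^n \ln\left|g'(x^{i-1})\right|$ is a point of differentiability of $g$, and the previous step gives $\ln\left|g'(x^{i-1})\right| = \ln N$ for every index $i$.

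It then remains only to substitute these values into the formula for the Lyapunov exponent. The sum collapses to a constant one,
$$\frac{1}{n} \sum_{i=1}^n \ln\left|g'(x^{i-1})\right| = \frac{1}{n}\sum_{i=1}^n \ln N = \frac{1}{n}\, n \ln N = \ln N,$$
which is independent of $n$; passing to the limit $n \to +\infty$ yields $\lambda(x^0) = \ln N$. Observe that the resulting exponent is strictly positive whenever $N \geqslant 2$, in agreement with the Devaney chaos of $g$ established in Theorem~\ref{th:IC et topologie de l'ordre}.

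I do not expect any genuine analytic obstacle here: once Proposition~\ref{Prop:derivabilite des ICs} is available, the computation is immediate. The only two points deserving care are, on the one hand, checking that the generalisation from base $10$ to base $N$ indeed produces a slope exactly equal to $N$ (a shift by one digit in base $N$ multiplies the decimal part by $N$), and, on the other hand, observing that the set $\mathcal{L}$ is precisely the set of initial conditions whose orbit never meets $I$, so that the derivative is defined all along the trajectory. One might additionally remark that $I$ together with all its iterated preimages under $g$ forms only a countable set, whence $\mathcal{L}$ is of full Lebesgue measure in $\big[ 0, 2^{N} \big[$; this is not needed for the statement, which is asserted only for $x^0 \in \mathcal{L}$, but it shows that the hypothesis excludes only a negligible set.
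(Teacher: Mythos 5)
Your proposal is correct and takes essentially the same approach as the paper's proof: piecewise linearity of $g$ with constant slope (Proposition~\ref{Prop:derivabilite des ICs}), the definition of $\mathcal{L}$ guaranteeing that the derivative exists along the whole orbit, and the collapse of the Lyapunov sum to a constant whose limit is the exponent. If anything, you are slightly more careful than the paper, whose proof carries out the computation only in the worked base $10$ (concluding $\ln 10$) and leaves the generalisation to block size $N$ implicit, whereas you address the base-$N$ slope explicitly.
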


\begin{proof}
The function $g$ is piecewise linear, with a slop of 10, as $g'(x)=10$
where $g$ is differentiable. Then $\forall x \in
\mathcal{L}$, $\lambda (x) = \lim_{n \to +\infty} \dfrac{1}{n} \sum_{i=1}^n
 \ln \left| ~g'\left(x^{i-1}\right)\right|
=  \lim_{n \to +\infty} \dfrac{1}{n} \sum_{i=1}^n \ln \left|10\right| 
= \lim_{n\to +\infty} \dfrac{1}{n} n \ln \left|10\right| = \ln 10.$
\end{proof}

\begin{remark}
The set of initial vectors for which this exponent is not defined is countable.
This is indeed the initial conditions such that an iteration value will be a
number having the form $\dfrac{n}{10}$, with $n\in \mathds{N}$. We can reach such a
 real number only by starting iterations on a \emph{decimal number}, as this latter must have
a finite fractional part.
\end{remark}

\begin{remark}
For a system having $\mathsf{N}$
cells, we will find, mutatis, an infinite uncountable set of initial conditions $x^0 \in \left[0;2^\mathsf{N}\right[$ such that
$\lambda (x^0) = \ln (\mathsf{N})$.
\end{remark}

So, it is possible to make the Lyapunov exponent of our CBC mode as large as possible, depending on the size of the block message.

\section{Conclusion and Future work}
We have available now a new quantitative property concerning the CBC mode of operation: its Lyapunov exponent is equal to ln(N), where N is the size of the block message.
This exponent allows to quantify how the ignorance on the exact initial vector increases after several iterations of the mode of operation. It illustrates the disorder
generated by iterations of such a process, reinforcing its chaotic nature.

Using the semi-conjugacy described here, it will be possible in a
future work to compare the topological behavior of various modes of operation on
$\mathcal{X}_\mathsf{N}$ and on $\mathds{R}$. This semi-conjugacy can be used to investigate various interesting directions, as to have a new understanding of the modes of operations while considering them as iterations on the real line. Their dynamics can be better understood thanks to the use of mathematical analyzis tools. Finally, elements of comparison with usual iteration ways can be provided too, as we will consider the same iteration set, namely the real line.

\bibliographystyle{unsrt}
\bibliography{biblio}
\end{document}